\documentclass[11pt]{article}
\pdfoutput=1
\usepackage[centertags]{amsmath}
\usepackage{amssymb}

\newtheorem{theorem}{Theorem}[section]

\newtheorem{lemma}{Lemma}[section]

\providecommand{\expectation}[2]{\mathbb{E}_{#2}\left[#1\right]}
\providecommand{\probab}[2]{\mathbb{P}_{#2}\left[#1\right]}

\providecommand{\binom}[2]{{#1\choose#2}}

\newenvironment{proof}[0]{\textit{Proof.} }{\hfill  $\blacksquare$ } 

\providecommand{\degree}[2]{{\textrm{deg}_{#1}(#2)}}
\providecommand{\vcprob}[3]{f^{#1}_{#2}(#3)}

\providecommand{\rktree}[2]{G^{#1}(#2)}

\providecommand{\tdeg}[2]{X_{#1}(#2)}

\providecommand{\edeg}[2]{\expectation{\tdeg{#1}{#2}}{}}

\providecommand{\ktree}{k-tree}
\providecommand{\ktrees}{k-trees}

















\topmargin -1.1 cm
\oddsidemargin 0.2 in
\evensidemargin 0 in
\textwidth 6.0 in
\textheight 8.6 in 
\parskip 0.1cm

\begin{document}

\title{The Degree Distribution of Random k-Trees}

\author{Yong Gao \thanks{Supported in part by NSERC Discovery Grant RGPIN 327587-06} \\
    Department of Computer Science, \\
    Irving K. Barber School of Arts and Sciences, \\
    University of British Columbia Okanagan, \\
    Kelowna, Canada V1V 1V7
}
\maketitle

\begin{abstract}
A power law degree distribution is established for
a graph evolution model based on the graph class of \ktrees. This \ktree-based
graph process
can be viewed as an idealized model that captures some characteristics of
the preferential attachment and copying mechanisms that existing evolving graph processes
fail to model due to technical obstacles. The result also serves as a further cautionary note
reinforcing the point of view that a power law degree distribution
should not be regarded as
the only important characteristic of a complex network, as has been previously argued \cite{dimitris05power,li05,mitzenmacher05}.

\end{abstract}

\section{Introduction}

Since the discovery of the power-law degree distribution of the web graphs and other complex large-scale networks, many random models
for such networks have been proposed \cite{albert02complex,bollobas01scalefree,cooper03,dorogovtsev02}.
By studying a variety of graph models with a power law degree distribution, it is hoped that one can gain insight into the characteristics
of real-world complex networks that are algorithmically exploitable, and can use these models as a tool for empirical studies \cite{chakrabarti06}.
It is therefore desirable to have random models that not only exhibit power law degree distributions, but also have other
structural features specified in a controlled manner.

Most of the existing models for complex networks
define a graph evolution process in which vertices are added to the current graph one at a time.
In each time step, the newly-added vertex is connected to
a number of existing vertices selected according to some probability distribution.
Two popular ways to specify the probability distribution for vertex selection are \textit{preferential attachment}
and \textit{copying}
(also known as \textit{duplication}). In the preferential attachment model, an existing vertex is selected with probability
in proportion to its vertex degree. In the copying model, neighbors of an existing vertex (selected uniformly at random) are
sampled to determine the vertices to connect to.

Bollobas et al. \cite{bollobas01scalefree}  proved the first rigorous result on the power law degree distribution of such
graph evolution models, showing that with high probability the degree distribution of the Barabasi-Albert model \cite{albert02complex}
obeys a power law $d^{-3}$.
Since then, many variants of the preferential attachment model have been proposed
by introducing additional parameters that manipulate the probability with which an existing vertex is to be selected. The motivation
is to construct models that obey a power law degree distribution with the exponent depending on some adjustable parameters so that
a variety of power law distributions observed in the real-world setting can be modelled.
Jordan \cite{jordan06} analyzed a slightly generalized model investigated by Dorogovtsev et al. \cite{dorog00}
and showed that for large constant $d > 0$, the proportion of vertices of  degree $d$
follows a power law $d^{-\gamma}$ with the exponent $\gamma \in (2, \infty)$ determined by two adjustable parameters.
Aiello, Chung, and Lu \cite{aiello01} and Cooper and Frieze \cite{cooper03} studied even more general
preferential-attachment  models with a set of parameters. These parameters specify the number of existing vertices to be selected in each step and control in a
probabilistic way how these vertices are selected. A vertex can be selected by sampling uniformly at random from existing vertices or by the preferential attachment mechanism.
Among the other results, Cooper and Frieze showed that in their general model the proportion of vertices of degree $d > 0$ follows a power law with the
exponent $\gamma \in (2, \infty)$ determined by the model parameters. In all of the preferential attachment
models, it is an essential assumption that the vertices to be connected to the new vertex are selected independently of each other.

The first model with copying mechanism for the web graphs is proposed in \cite{kumar}.  A similar model, called the duplication model, arises in the context of biological networks \cite{chung03}.
With the copying mechanism,  a new vertex $v_{n + 1}$ is connected
to a set of existing vertices using the following scheme:
\begin{enumerate}
\item An existing vertex $v_i$ is selected uniformly at random to copy from.
\item Let $N(v_i)$ be the set of neighbors of $v_i$ in $\{v_1, v_2, \cdots, v_{i - 1}\}$.
The vertex $v_{n + 1}$ is then connected to a subset of $N(v_i)$ selected in a probabilistic fashion.
The number of neighbors that $v_{n+1}$ is connected to is called the out-degree of $v_{n+1}$.
\end{enumerate}
Without any extra work, the above copying mechanism generates a star-like graph centered on the initial graph $G_0$.
To overcome this limitation, Kumar et al. \cite{kumar} require that the out-degree (i.e., the number of out-edges) is a constant
and implement this by connecting the new vertex to either its neighbors or other vertices selected uniformly at random  which is crucial for the construction to work.
For the case that the out-degree is 1, it was proved in \cite{kumar} that the in-degree sequence has a power law distribution with high probability.

In the duplication models studied in \cite{chung03,bebek06}, $N(v_i)$ is extended to contain all the neighbors of $v_i$ and each vertex in this extended $N(v_i)$ is
connected to $v_{n + 1}$ independently with a certain probability. As noted in \cite{bebek06}, a correction step has to be employed to avoid the generation
of degenerate graph processes.  Power law distributions for the expected fraction of vertices of a given degree are proved
in \cite{chung03,bebek06}. 
Cooper and Frieze \cite{cooper03} use a copying scheme in which the neighbors of $v_{n+1}$ is selected one at a time
by repeating the process a number of times independently. This makes the (highly-complicated) analysis
more approachable, but spoils to a large extent the idea of
the copying mechanism that is intended to capture the phenomenon that neighboring vertices are likely to be connected together to a new vertex.

In this paper, we study a random model for the well-known graph class of \ktrees,
which may serve as an alternative (and idealized) model in the study of complex networks.
The notion of k-trees is a generalization of trees and is closely related to the concept of treewidth in graph theory \cite{kloks94}.
We show that the degree distribution of a graph evolution process obtained by a straightforward randomization of the recursive definition of \ktrees\ obeys the power law
$$
d^{-(1 + \frac{k}{k-1})}
$$
with high probability for large $d$, where $k$ is the parameter that characterizes the degree to which a graph is tree-like.
In addition to introducing an alternative model with preferential attachment and copying mechanisms, we hope that
the fact that a power law degree distribution exists in such a graph class with quite unique structural characteristics
serves as a further cautionary note, reinforcing the viewpoint that a power law degree distribution should not be regarded as
the only important characteristic of a complex network, as has been previously argued in \cite{li05,mitzenmacher05}. We note that in \cite{dimitris05power}, 
the inherent bias of existing approaches in the empirical study of 
the Internet graph was identified --- it was shown that the widely-used 
traceroute sampling method ``can make power laws appear where none existed in
the underlying graphs!"

In the next section, we introduce the construction of the random \ktrees and discuss its relation to existing models of complex networks.
In Section 3, we prove the power law degree distribution of random \ktrees.
We conclude in Section 4 
with a discussion on the construction of random partial \ktrees.

\section{Random k-Trees: the Construction}

Throughout this paper, the degree of a vertex $v$ in a graph $G$ is denoted by $\degree{G}{v}$.  A $k$-clique of a graph is understood as a complete subgraph on a set of $k$ vertices. All the graphs considered in this paper are undirected.

The construction of a random \ktree\ is based on the following simple randomization of the recursive definition of \ktrees\ \cite{kloks94}.
Starting with an initial clique $\rktree{k}{k + 1}$ of size $k + 1$, a sequence of graphs $\{\rktree{k}{n}, n \geq k + 1\}$ is constructed by adding vertices to the graph one at a time.
To construct $\rktree{k}{n + 1}$, we add a new vertex $v_{n + 1}$ and then connect it to the $k$ vertices of a k-clique selected
uniformly at random from all the k-cliques in $\rktree{k}{n}$.
We call the graph process $\{\rktree{k}{n}, n \geq k + 1\}$ a \textit{k-tree process}.

\subsection{Relations to Existing Models}
In this subsection, we discuss some basic properties of the $k$-tree process, 
including the number of $k$-cliques in $G^{k}(n)$ and the probability 
that an existing vertex of a given degree is connected to a new vertex. These properties 
are needed in the proof of  our main result. They also enable us to illustrate further the relations between the $k$-tree process and existing graph evolution models.     

Let $\mathcal{C}_{n}$ be the set of cliques of size $k$ in the graph $\rktree{k}{n}$.
It is easy to see that when a new vertex is added, exactly $\binom{k}{k - 1}$ new $k$-cliques are created and none of the existing $k$-cliques is destroyed.
So, taking into consideration the initial clique of size $k + 1$, we see that the total number of $k$-cliques in
$\rktree{k}{n}$ is
$$
|\mathcal{C}_n| = (n - k - 1)k + (k + 1).
$$

Consider a vertex $v$ in $G^{k}(n)$. 
Since every time a new vertex is added and connected to the vertex $v$, 
exactly $\binom{k - 1}{k - 2}$  new $k$-cliques are created
that contain $v$ as one of its vertices, the total number of $k$-cliques in $\rktree{k}{n}$ containing $v$ is
\begin{equation}
 n^{*} = \binom{k}{k - 1} + \binom{k - 1}{k - 2}(\degree{\rktree{k}{n}}{v} - k),
\end{equation}
where the first term is the number of the $k$-cliques containing $v$ that are created when $v$ is added to the graph and the second term is the total number of $k$-cliques 
containing $v$ that are created later on when $v$ is connected to new vertices. 

Therefore given $G^{k}(n)$ (i.e., conditional on $G^{k}(n)$),  the conditional
probability for $v$ to be connected to the new vertex $v_{n + 1}$ is
\begin{eqnarray}
\label{eq:prop}
&&\probab{v \textrm{ is connected to } v_{n + 1}\ |\ \rktree{k}{n}}{} \nonumber \\ 
 &=& \frac{n^{*}}{|\mathcal{C}_n|}
 = \frac{a_k\degree{\rktree{k}{n}}{v} - b_k}{c_k n}
\end{eqnarray}
where $a_k = k - 1, b_k = k(k-2), \textrm{ and } c_k = k - \frac{k^2 - 1}{n}$. 

Note that the above expression only depends on the degree of $v$ in $\rktree{k}{n}$. 
It follows that, given $\degree{\rktree{k}{n}}{v} = d$,
the conditional probability for $v$ to be connected to $v_{n + 1}$ is
\begin{equation}
\label{eq-attach-prob}
\vcprob{k}{d}{n} \stackrel{\text{def}}{=} 
  \probab{v \textrm{ is connected to } v_{n+1} \ |\ \degree{\rktree{k}{n}}{v} = d}  {}
 =  \frac{a_kd - b_k}{c_k n}
\end{equation}
where $a_k = k - 1, b_k = k(k-2), \textrm{ and } c_k = k - \frac{k^2 - 1}{n}$.
We see that even though there is no explicit preferential-attachment mechanism employed, equation (\ref{eq-attach-prob}) shows that the construction scheme does have a similar effect.

\subsection{The Advantages of Random $k$-Trees}
The \ktree\ construction scheme can be viewed as a very rigid copying mechanism; In step $n + 1$, the new vertex $v_{n + 1}$
is connected to an existing vertex $v_i$ selected uniformly at random from $\{v_1, v_2, \cdots, v_n\}$ and to a subset of $k - 1$ vertices selected uniformly at random
without replacement from the $k$ neighbors that are connected to $v_i $ in step $i$.

As has been discussed in Section 1, in almost all the existing  preferential-attachment models
and copying models, there is an essential assumption that old vertices to be connected to 
a new vertex are selected independently. The random $k$-tree model studied in the current paper
is unique in that  these vertices are selected in a highly correlated manner. This
captures in a better way the phenomenon that neighboring vertices are more likely to be 
connected to a new vertex, which is exactly what the copying mechanism tries to model.  
In addition, the random $k$-tree has by construction a treewidth $k$ --- a structural
feature of algorithmic significance that none of the existing models  
has a mechanism to control.

\section{The Degree Distribution of Random $k$-Trees}
This section is devoted to proving that for the \ktree\ process, the proportion of vertices of degree $d$ follows asymptotically
a power law $d^{-\gamma}$ with  exponent $\gamma = 1 + \frac{k}{k - 1}$.  Throughout the discussion, we assume that
$k$ is a fixed constant. 
In the following, we use $\tdeg{d}{n}$ to denote the random variable for the total number
of vertices of degree $d$ in $\rktree{k}{n}$, and write    
$$
\alpha_{d} \triangleq
      \frac{\Gamma(3 + \frac{2}{k - 1})}{\Gamma(1 + \frac{1}{k - 1})} \frac{\Gamma(d - \frac{k(k-2)}{k-1})}{\Gamma(d - \frac{k(k-2)}{k-1} + \frac{k}{k-1} + 1)}
$$
which, by Stirling's approximation, is approximately
$$
e^{-(1 + \frac{k}{k-1})}d^{-(1 + \frac{k}{k-1})}
$$
for large $d$.

Denote by $\mathcal{F}_n = \sigma(\rktree{k}{n}, n \geq 1)$
the $\sigma$-algebra generated by the \ktree\ process up to time $n$. 
We use $I_{A}$ to denote the indicator function of an event $A$. To ease the presentation, we 
use $I_d(i, n)$ to denote the indicator function of the event that the degree of the vertex 
$v_i$ in $\rktree{k}{n}$ is $d$, i.e.,
$$
I_d(i, n) = \left \{
\begin{array}{ll}
1, & \ \ \degree{\rktree{k}{n}}{v_i} = d \\
0, & \ \ \textrm{otherwise.}
\end{array}
\right.
$$

The following simple observation will be used to deal with the case $d = k$.
\begin{lemma}
\label{lem:base}
For any vertex $v$ and $n \geq k + 1$, $\degree{\rktree{k}{n}}{v} \geq k$. Furthermore, for $n \geq k + 2 $
any k-clique in $\rktree{k}{n}$ contains at most one vertex with $\degree{\rktree{k}{n}}{v} = k$.
\end{lemma}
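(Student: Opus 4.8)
The plan is to prove both statements by tracking how vertex degrees evolve over the \ktree\ process, using the single structural fact that every vertex enters the graph with degree exactly $k$ and that degrees never decrease thereafter. For the first claim, observe that in the initial clique $\rktree{k}{k+1}$ every vertex is adjacent to each of the other $k$ vertices, so it has degree exactly $k$; and any vertex $v_j$ added at a later step $j \geq k+2$ is connected to precisely the $k$ vertices of one $k$-clique, so it too has degree exactly $k$ at the moment it appears. Since inserting a new vertex together with its $k$ incident edges can only increase the degrees of the vertices already present, $\degree{\rktree{k}{n}}{v} \geq k$ holds for every $v$ in $\rktree{k}{n}$ and all $n \geq k+1$.

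For the second claim I would first record when a vertex has degree \emph{exactly} $k$. By the monotonicity just used, a vertex $v$ satisfies $\degree{\rktree{k}{n}}{v} = k$ if and only if, from the step at which $v$ enters up to time $n$, no later vertex is attached to a $k$-clique containing $v$; equivalently, $v$ is never selected into the clique chosen for a subsequent new vertex. In particular, if $v = v_j$ is added at step $j \geq k+2$ and still has degree $k$ in $\rktree{k}{n}$, then its neighbor set is exactly the $k$-clique it attached to at step $j$, and that set was already present in $\rktree{k}{j-1}$.

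The heart of the proof is a contradiction argument. Suppose that for some $n \geq k+2$ a $k$-clique $K$ of $\rktree{k}{n}$ contains two distinct vertices $u$ and $w$, both of degree $k$. I first argue that at most one of them belongs to the initial clique: as soon as the first extra vertex $v_{k+2}$ is attached, it joins $k$ of the $k+1$ initial vertices and raises each of their degrees to $k+1$, so at most one initial vertex can still have degree $k$ in $\rktree{k}{n}$. Hence the later-created of $u,w$ --- call it $v_j$ --- is a vertex added at some step $j \geq k+2$, while the other, $u$ say, was already present in $\rktree{k}{j-1}$. Since $u$ and $v_j$ both lie in $K$ they are adjacent, so $u$ is a neighbor of $v_j$; but $\degree{\rktree{k}{n}}{v_j} = k$ forces the neighbors of $v_j$ to be exactly the $k$-clique it attached to at step $j$, whence $u$ lies in that clique. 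Consequently $u$'s degree was increased by one at step $j$, giving $\degree{\rktree{k}{n}}{u} \geq \degree{\rktree{k}{j-1}}{u} + 1 \geq k+1$, contradicting $\degree{\rktree{k}{n}}{u} = k$.

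The only delicate point --- and the reason the hypothesis $n \geq k+2$ is needed --- is the treatment of the initial clique, whose $k+1$ vertices are created simultaneously and all have degree $k$ in $\rktree{k}{k+1}$; indeed, for $n = k+1$ every $k$-clique contains $k$ vertices of degree $k$, so the statement genuinely fails there. Once one checks that attaching the first vertex $v_{k+2}$ eliminates all but at most one of these degree-$k$ initial vertices, the creation-order argument above applies uniformly, and I expect no further difficulty.
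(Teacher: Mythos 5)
Your proof is correct, and it takes a genuinely different route from the paper's. The paper proves the second claim by induction on $n$: in the base case $\rktree{k}{k+2}$ there are exactly two degree-$k$ vertices (the newly attached vertex and the one initial vertex it missed), which are non-adjacent and hence share no clique; in the inductive step, the $k$ new $k$-cliques created by adding $v_{n+1}$ have $v_{n+1}$ as their only degree-$k$ vertex (every other vertex of such a clique just had its degree bumped above $k$), and the old cliques inherit the property from the inductive hypothesis since degrees never decrease. You instead give a direct, non-inductive contradiction argument organized by creation order: after disposing of the initial clique (once $v_{k+2}$ attaches, at most one initial vertex can still have degree $k$, and monotonicity preserves this), you observe that a later-added vertex $v_j$ of current degree $k$ has neighborhood \emph{exactly} its attachment clique, so any degree-$k$ companion $u$ in a common $k$-clique must lie in that attachment clique and therefore had its degree raised to at least $k+1$ at step $j$ --- contradicting $\degree{\rktree{k}{n}}{u} = k$. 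Your frozen-neighborhood observation makes the structural mechanism explicit and handles all times $n$ uniformly in one stroke, whereas the paper's induction is more compact but leaves the propagation through old cliques somewhat implicit; both rest on the same two facts (every vertex enters with degree exactly $k$, and degrees are monotone nondecreasing), so the difference is one of decomposition rather than substance. Your aside that the claim genuinely fails at $n = k+1$ is also a nice touch the paper omits.
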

\begin{proof}
The first claim that $\degree{\rktree{k}{n}}{v} \geq k$ follows from the fact that when a new vertex is added, 
it is connected to the $k$ vertices of the selected $k$-clique.  

We use induction to prove the second claim. First, consider the base case of $n = k + 2$. Recall that $\rktree{k}{k + 2}$ 
is obtained by connecting a new vertex $v_{k + 1}$ to the vertices of a $k$-clique 
in the initial $(k + 1)$-clique. We see that in $\rktree{k}{k + 2}$ there are exactly two vertices of degree $k$, namely
the vertex $v_{k+1}$ and one of the vertices in $\{v_1, \cdots, v_k\}$ that is not connected to $v_{k + 1}$.
Therefore, no $k$-clique in $\rktree{k}{k + 2}$ contains more than one vertex of degree $k$, and thus the second claim 
holds for the base case of $n = k + 2$. 

Assume that the second claim holds for $\rktree{k}{n}$. Consider the graph $\rktree{k}{n + 1}$ obtained from $\rktree{k}{n}$. 
Note that by adding a new vertex $v_{n + 1}$ to $\rktree{k}{n}$ and connecting it to the vertices of a $k$-clique in 
$\rktree{k}{n}$, exactly $k$ new $k$-cliques are created each of which has $v_{n+1}$ as its only vertex of degree $k$. By the assumption that the second claim holds for $\rktree{k}{n}$,  
no $k$-clique in $\rktree{k}{n + 1}$ contains more than one
vertices of degree $k$. This completes the induction step and the second claims follows.             
\end{proof}

The next theorem shows that the expected degree sequence of the \ktree\ process obeys a power law distribution.
\begin{theorem}
\label{theorem-average}
Let $\edeg{d}{n}$ be the expected number of vertices with degree $d$ in the random \ktree\ $\rktree{k}{n}$. There exists a
constant $N = N(k)$ (independent of $d$) such that for any $n > N$,
\begin{equation}
\label{eq-average-limit}
\left|\edeg{d}{n} - \alpha_dn\right| \leq C
\end{equation}
where $C = C(k)$ is a constant that is independent of $d$ and $n$.
\end{theorem}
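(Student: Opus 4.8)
The plan is to track the expected degree counts $\edeg{d}{n}$ through a \emph{master recurrence} and then show that $\alpha_d n$ is a near-stationary solution whose deviation stays bounded. Conditioning on $\mathcal{F}_n$ and using linearity of expectation together with the fact, established in (\ref{eq-attach-prob}), that a vertex of degree $d$ is joined to $v_{n+1}$ with probability $\vcprob{k}{d}{n}$ depending only on its degree, one obtains for $d > k$
\[
\edeg{d}{n+1} = \left(1 - \vcprob{k}{d}{n}\right)\edeg{d}{n} + \vcprob{k}{d-1}{n}\,\edeg{d-1}{n},
\]
since a vertex lies in the degree-$d$ class at time $n+1$ either by having degree $d$ and not being selected, or by having degree $d-1$ and being selected. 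For the boundary value $d=k$ the first part of \lemref{lem:base} removes any inflow from degree $k-1$, while the inserted vertex $v_{n+1}$ always has degree exactly $k$, giving
\[
\edeg{k}{n+1} = \left(1 - \vcprob{k}{k}{n}\right)\edeg{k}{n} + 1 .
\]
(The second part of \lemref{lem:base}, bounding the per-step change, is what one would invoke for a concentration statement, but is not needed for the expectation.)

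Next I would verify that $\alpha_d$ is essentially the fixed profile of this recurrence. Substituting the ansatz $\edeg{d}{n}\approx\alpha_d n$ and replacing $c_k$ by its limiting value $k$ turns the displays into a first-order recurrence in $d$, namely $\alpha_d/\alpha_{d-1} = (a_k(d-1)-b_k)/(k + a_k d - b_k)$, the multiplicative constant being fixed by the $d=k$ boundary equation (equivalently by $\sum_{d\ge k}\alpha_d = 1$, forced since $\sum_d \tdeg{d}{n}=n$). Telescoping this ratio via $\Gamma(x+1)=x\Gamma(x)$ produces exactly the quotient of Gamma functions defining $\alpha_d$. The purpose of the exact algebra is to compute the \emph{defect} $\epsilon_d(n) := \alpha_d(n+1) - [(1-\vcprob{k}{d}{n})\alpha_d n + \vcprob{k}{d-1}{n}\alpha_{d-1}n]$ incurred by using the true $c_k = k - (k^2-1)/n$ rather than its limit; a short computation using the defining recurrence for $\alpha_d$ gives $|\epsilon_d(n)| = \alpha_d (k^2-1)/(c_k n)$, so the per-step error is $O(\alpha_d/n)$, proportional to the small mass $\alpha_d$.

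With $\Delta_d(n) := \edeg{d}{n} - \alpha_d n$, subtracting the two recurrences yields $\Delta_d(n+1) = (1-\vcprob{k}{d}{n})\Delta_d(n) + \vcprob{k}{d-1}{n}\Delta_{d-1}(n) - \epsilon_d(n)$, and I would prove $\sup_d|\Delta_d(n)|\le C$ by induction on $n$, treating all $d$ at once. The crux is the identity $\vcprob{k}{d}{n} - \vcprob{k}{d-1}{n} = a_k/(c_k n) > 0$: assuming $|\Delta_{d'}(n)|\le C$ for every $d'$, the triangle inequality gives
\[
|\Delta_d(n+1)| \le C\left(1 - \frac{a_k}{c_k n}\right) + |\epsilon_d(n)| = C - \frac{1}{c_k n}\Big[(k-1)C - (k^2-1)\alpha_d\Big],
\]
which is at most $C$ as soon as $C \ge (k+1)\alpha_d$; since $\alpha_d\le\alpha_k$ a single $C=C(k)$ suffices for all $d$. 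The base case $n=N$ is handled by crude bounds, $|\Delta_d(N)|\le \edeg{d}{N} + \alpha_d N \le N + \alpha_k N$ uniformly in $d$, so enlarging $C$ to absorb this fixed quantity completes the induction; the analogous one-line estimate closes the $d=k$ row using $\vcprob{k}{k}{n} = k/(c_k n)$.

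The main obstacle is securing the constant $C$ \emph{uniformly in $d$}: naively the high-degree classes might accumulate error. The resolution is the exact cancellation above, which shows the error map has net contraction rate $a_k/(c_k n)$ at \emph{every} level, while the driving defect is itself proportional to $\alpha_d$; the two scale together so that no $d$-dependence survives in the required bound $C \ge (k+1)\alpha_d$. The only separate case is $d \ge n$: there $\edeg{d}{n}=0$ because the maximum degree in $\rktree{k}{n}$ is at most $n-1$, while $\alpha_d n = O(n^{1-\gamma}) = o(1)$ since $\gamma = 1 + \frac{k}{k-1} > 2$, so $|\Delta_d(n)|$ is bounded outright; for $d \le n-1$ one checks $\vcprob{k}{d}{n} < 1$, keeping every coefficient of the error recurrence nonnegative, and the induction applies. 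Choosing $N=N(k)$ large enough that $c_k$ is bounded away from $0$ and all the above estimates hold then yields the claimed bound.
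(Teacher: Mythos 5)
Your proposal is correct and follows essentially the same route as the paper: the same expectation recurrence $\edeg{d}{n + 1} = \vcprob{k}{d - 1}{n}\edeg{d - 1}{n} + (1 - \vcprob{k}{d}{n})\edeg{d}{n}$ with the same $d = k$ boundary, the same stationary profile satisfying $\beta_d = \frac{a_k(d-1) - b_k}{a_kd - b_k + k}\beta_{d-1}$ with per-step defect of magnitude $\frac{k^2 - 1}{c_k n}\alpha_d$, and the same induction on $n$ driven by the contraction identity $\vcprob{k}{d}{n} - \vcprob{k}{d - 1}{n} = \frac{a_k}{c_k n}$, yielding a constant bound on $|\edeg{d}{n} - \alpha_d n|$ uniform in $d$. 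Your departures are cosmetic rather than substantive --- you close the $d = k$ row inside the induction instead of solving its recurrence exactly, and you correctly observe that deriving the base-case recurrence via per-vertex indicators and linearity makes the second half of \lemref{lem:base} dispensable for the expectation --- and your explicit treatment of uniformity in $d$, the induction base at $n = N$, and the boundary regime $d \geq n$ fills in details the paper's ``prove by induction'' leaves implicit.
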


The above result is proved by first establishing a recurrence  
for the expected number $\edeg{d}{n}$ of vertices with a given degree, and then showing that 
$\edeg{d}{n}$ can be asymptotically approximated by $\beta_d n$ where
the sequence $\{\beta_d\}$ is the unique solution to the following simple recurrence relation 
\begin{equation}
\label{eq-limit-case}
 \beta_d = \frac{a_k(d - 1) - b_k}{a_kd - b_k + k} \beta_{d - 1},\ \  \beta_{k} = \frac{1}{2}.
\end{equation}

Recall that to construct the graph $\rktree{k}{n + 1}$ from $\rktree{k}{n}$,
a new vertex added to the graph  will be connected to all the vertices of
a randomly-selected k-clique. This creates a high correlation between the degree of the vertices.
A recurrence is still possible due to the fact that
the conditional probability for a vertex $v$ to have
a degree $d$ in $\rktree{k}{n + 1}$ given $\rktree{k}{n}$ only depends on
the degree of $v$ in $\rktree{k}{n}$. A detailed account is given in the following proof.

\begin{proof}\ [\textbf{Proof of Theorem \ref{theorem-average}}]
To begin with, consider the base case $d = k$.
Due to Lemma~\ref{lem:base}, we have
\begin{equation}
\tdeg{k}{n+1} = \left \{
   \begin{array}{ll}
    & \tdeg{k}{n}, \textrm{ if a k-clique containing a vertex of degree k is selected}, \\
    & \tdeg{k}{n} + 1, \textrm{ otherwise}
   \end{array}
   \right.
\end{equation}
Let $A$ be the event that a $k$-clique containing a vertex of degree $k$ is selected
in step $n + 1$ and let $I_{A}$ be its indicator function. We have
$$
  X_{k}(n + 1) = X_{k}(n)I_{A} + (X_{k}(n) + 1)I_{A^c} 
$$
where $A^c$ is the complement of $A$.

By Lemma~\ref{lem:base}, a $k$-clique contains at most 
one vertex of degree $k$. It follows that the conditional expectation
of $I_A$ (which is equal to the conditional probability of $A$) 
is equal to $X_{k}(n)$, the  total number of degree-$k$ vertices
in $\rktree{k}{n}$, times the conditional probability that a degree-$k$
vertex is selected to be connected to $v_{n+1}$, i.e.,    
\begin{equation}
\label{eq:condp-base}
  \expectation{I_A | \mathcal{F}_n}{} = \vcprob{k}{k}{n} X_k(n). 
\end{equation}

Therefore, by the basic  properties of conditional expectation in theory of probability,
we have
\begin{eqnarray}
\label{eq:conde-base}
\expectation{\tdeg{k}{n + 1}| \mathcal{F}_n}{} &=& 
    \expectation{\tdeg{k}{n}I_A + (\tdeg{k}{n} + 1) I_{A^c} | \mathcal{F}_n}{} \nonumber \\
 &=& \expectation{\tdeg{k}{n}I_A | \mathcal{F}_n}{} +    
        \expectation{(\tdeg{k}{n} + 1) I_{A^c} | \mathcal{F}_n}{}  \nonumber \\
 &=& \expectation{I_A | \mathcal{F}_n}{}\tdeg{k}{n}
         + \expectation{I_{A^c} | \mathcal{F}_n}{} (\tdeg{k}{n} + 1)  
\end{eqnarray}
where the last equality is due to the fact that $X_{k}(n)$ is measurable
with respect to $\mathcal{F}_n$ (i.e., in the context of discrete probability space, $X_{k}(n)$
is a function of $G^{k}(n)$).

Combining equation (\ref{eq:condp-base}) and equation (\ref{eq:conde-base}), we have
\begin{eqnarray}
\label{eq:conde-base1}
\expectation{\tdeg{k}{n + 1}| \mathcal{F}_n}{} &=& \vcprob{k}{k}{n}\tdeg{k}{n} \tdeg{k}{n} + (1 - \vcprob{k}{k}{n}\tdeg{k}{n})(\tdeg{k}{n} + 1) \nonumber\\
              &=& 1 + (1 - \vcprob{k}{k}{n}) \tdeg{k}{n}. 
\end{eqnarray}
By the mathematical definition, $\expectation{\tdeg{k}{n + 1}| \mathcal{F}_n}{}$ itself
is a random variable measurable with respect to $\mathcal{F}_n$. 
Recall, from the probability theory, that the unconditional
expectation of the conditional expectation of a random variable is equal to the 
unconditional expectation of the random variable itself. So, we have
$$
\expectation{\expectation{\tdeg{k}{n + 1}| \mathcal{F}_n}{}}{} = \edeg{k}{n + 1}.
$$
Therefore, by taking expectations on both sides of equation
(\ref{eq:conde-base1}), we get the following recurrence 
\begin{equation}
\label{eq:conde-baser} 
\edeg{k}{n + 1}  = 1 + (1 - \vcprob{k}{k}{n}) \edeg{k}{n}.
\end{equation}

Solving the above recurrence (\ref{eq:conde-baser}) with $\edeg{k}{k+2} = 2$ gives us
\begin{equation}
\label{eq-formula-base}
\edeg{k}{n} =  \frac{1}{2}n + O(1).
\end{equation}

We now consider the general case of $d > k$. 
Recall that $I_d(i, n)$ is the indicator function of the event
$\{\degree{G^{k}(n)}{v_i} = d\}$.  The total number of 
vertices of degree $d$ in $G^{k}(n)$ can thus be written as
$\tdeg{d}{n} = \sum\limits_{i = 1}^{n}I_{d}(i, n)$. 
By the additive property of conditional expectation, we have
\begin{equation}
\label{eq-cond-total}
\expectation{\tdeg{d}{n + 1} | \mathcal{F}_n}{} = \sum\limits_{i = 1}^{n + 1} \expectation{I_d(i, n + 1) | \mathcal{F}_n}{}.
\end{equation}
Due to the way in which $\rktree{k}{n}$ is constructed,  the vertex $v_i$ has degree $d$ in 
$\rktree{k}{n + 1}$ if and only if one of the following two situations occurs:
\begin{enumerate}
\item The degree of $v_i$ in $\rktree{k}{n}$ is $d$, and $v_i$  is not selected to be connected
to $v_{n+1}$; or
\item The degree of $v_i$ in $\rktree{k}{n}$ is $d - 1$, and $v_i$ is selected to be connected
to $v_{n+1}$.   
\end{enumerate}
Therefore,  letting $B$ be the event that $v_i$ is selected to be connected to $v_{n + 1}$, we have
\begin{equation} 
\label{eq:conde-indicator}
I_{d}(i, n + 1) = I_{B}I_{d - 1}(i, n) + I_{B^c}I_{d}(i, n).
\end{equation}
We claim that 
\begin{equation}
\label{eq:conde-g1}
\expectation{I_{B}I_{d - 1}(i, n) | \mathcal{F}_n}{}
 = \vcprob{k}{d-1}{n} I_{d-1}(i, n).
\end{equation}
We prove the claim by the mathematical definition of conditional expectation.
Consider any event $A \in \mathcal{F}_n$.  
(Recall that $I_{d - 1}(i, n)$ is the indicator function of the event
$\{\degree{\rktree{k}{n}}{v_i} = d - 1\}$.) We have 
\begin{eqnarray*}
&&\expectation{I_{B}I_{d - 1}(i, n)I_{A}}{} \\
    &=& \probab{B \cap \{\degree{\rktree{k}{n}}{v_i} = d - 1\} \cap A}{}  \\
    &=& \probab{B\ |\ \{\degree{\rktree{k}{n}}{v_i} = d - 1\} \cap A }{}
         \probab{\{\degree{\rktree{k}{n}}{v_i} = d - 1\}\cap A}{} \\
    &=& \probab{B\ |\ \{\degree{\rktree{k}{n}}{v_i} = d - 1\}}{}
        \probab{\{\degree{\rktree{k}{n}}{v_i} = d - 1\} \cap A}{} \\
    &=& \vcprob{k}{d-1}{n} \expectation{I_{d - 1}(i, n) I_{A}}{},           
\end{eqnarray*}
where the second last equality is due to the fact that the event 
$\{\degree{\rktree{k}{n}}{v_i} = d\}$ completely determines 
the (conditional probability of) the event $B$.
The claim then follows from the mathematical definition of conditional expectation.

Similarly, we have
\begin{equation} 
\label{eq:conde-g2}
\expectation{I_{B^c}I_{d}(i, n) | \mathcal{F}_n}{}
 = (1 - \vcprob{k}{d}{n}) I_d(i, n).
\end{equation}
 
Combining equations (\ref{eq:conde-indicator}), (\ref{eq:conde-g1}),
and (\ref{eq:conde-g2}), we see that
for any $i < n + 1$,
\begin{eqnarray}
\label{eq-cond-indicator}
&&\expectation{I_d(i, n + 1) | \mathcal{F}_n}{} = \vcprob{k}{d-1}{n}I_{d-1}(i, n) + (1 - \vcprob{k}{d}{n}) I_d(i, n).
\end{eqnarray}
Also note that for $i = n + 1$, by the construction of $\rktree{k}{n}$ we have 
$$
\expectation{I_{d}(n + 1, n + 1) | \mathcal{F}_n}{} = 0
$$ 
for any $d > k$.
Summing over $i$ on both sides of equation (\ref{eq-cond-indicator}) and based
on equation (\ref{eq-cond-total}), we have
\begin{equation}
\label{eq-recurr-1}
\expectation{\tdeg{d}{n + 1} | \mathcal{F}_n}{} = \vcprob{k}{d - 1}{n}\tdeg{d - 1}{n} + (1 - \vcprob{k}{d}{n})\tdeg{d}{n}.
\end{equation}
Recall that the unconditional expectation of the condition expectation of a random variable
 is equal to the unconditional expectation of the random variable itself. 
Taking unconditional  expectations  on both sides of equation (\ref{eq-recurr-1}),
we get the following recurrence equation for the expected number of vertices of degree $d$:
\begin{equation}
\label{eq-Y-recursion}
\edeg{d}{n + 1} = \vcprob{k}{d - 1}{n}\edeg{d - 1}{n} + (1 - \vcprob{k}{d}{n})\edeg{d}{n}.
\end{equation}
Using the recurrence equation (\ref{eq-Y-recursion}) and 
the base case equation (\ref{eq-formula-base}), we now prove
that 
$|\edeg{d}{n} - \beta_{d}n|$ is asymptotically upper bounded by a constant, 
where the  sequence \{$\beta_d\}$ is the
unique solution to the following simple recurrence equation      
\begin{equation}
\label{eq-limit-case}
 \beta_d = \frac{a_k(d - 1) - b_k}{a_kd - b_k + k} \beta_{d - 1},\ \  \beta_{k} = \frac{1}{2}.
\end{equation}
Let $\epsilon_d^n = \edeg{d}{n} - \beta_dn$. For the base case $d = k$, we have from 
equation (\ref{eq-formula-base}) that $\epsilon_k^{n}  =  O(1)$.
For the general case $d > k$, we have from equation (\ref{eq-Y-recursion}) that
\begin{eqnarray}
\label{eq:epsilon}
\epsilon_d^{n + 1} &=&  \vcprob{k}{d - 1}{n}\epsilon_{d - 1}^{n} + \vcprob{k}{d - 1}{n}\beta_{d - 1}n  \nonumber \\
                     \ \ \ \ \ \ \ \ \, \,  & & + (1 - \vcprob{k}{d}{n})\epsilon_{d}^{n} + (1 - \vcprob{k}{d}{n}) \beta_{d}n - (n+1)\beta_{d}.
\end{eqnarray}
By the definition of $\beta_d$ (equation (\ref{eq-limit-case})), we see that
\begin{eqnarray*}
 &&\vcprob{k}{d - 1}{n}\beta_{d - 1}n + (1 - \vcprob{k}{d}{n}) \beta_{d}n - (n+1)\beta_{d} \\
 && = \frac{a_k(d - 1) - b_k}{c_k}\beta_{d - 1} - \frac{a_kd - b_k}{c_k}\beta_d - \beta_d \\
  && = \frac{a_kd - b_k + k}{c_k}\beta_d - \frac{a_kd - b_k}{c_k}\beta_d - \beta_d  \ \ \ \ \ \ \ (\textrm{using (\ref{eq-limit-case}))}\\
  && = \frac{k - c_k}{c_k}\beta_d  \\
  && = \frac{k^2 - 1}{c_kn}\beta_d \ \ \ \ \ \ \ \ (\textrm{ since } c_k = k - (k^2 - 1)/n).
\end{eqnarray*}
Thus, we have
\begin{eqnarray}
\label{eq:epsilon:bound}
|\epsilon_d^{n + 1}| &\leq& \vcprob{k}{d - 1}{n} |\epsilon_{d - 1}^{n}| + (1 - \vcprob{k}{d}{n})|\epsilon_d^{n}|
                           + \frac{k^2 - 1}{nc_k} \beta_{d} \nonumber \\
                  &\leq& (1 + \vcprob{k}{d - 1}{n} - \vcprob{k}{d}{n}) \max(|\epsilon_{d - 1}^{n}|, |\epsilon_d^{n}|)
                    + \frac{k^2 - 1}{nc_k} \beta_{d}
\end{eqnarray}
From (\ref{eq-limit-case}), we see that $\beta_d \leq \beta_{d - 1}$ for any $d > k$.
Since  by (\ref{eq:epsilon}) $\epsilon_k^{n}  =  O(1)$ and since 
$$
\vcprob{k}{d - 1}{n} - \vcprob{k}{d}{n} = -\frac{a_k}{nc_k} < 0
$$
by the definition of  $\vcprob{k}{d}{n}$, we can use (\ref{eq:epsilon:bound}) to prove by induction
 that there exists a constant $N = N(k) > 0$ independent of $d$ such that for any $n > N$,
$|\epsilon_{d}^n|$ is bounded by a constant $C = C(k)$ independent of $d$ and $n$, and therefore
\begin{equation}
\label{eq-claim}
  \edeg{d}{n} = \beta_d n + O(1).
\end{equation} 
To complete the proof of Theorem~\ref{theorem-average},
we see from the definition of $\beta_d$ that
\begin{eqnarray}
\beta_d &=& \prod\limits_{l = k}^{d}\frac{a_k(l - 1) - b_k}{a_kl - b_k + k}
  = \prod\limits_{l = k}^{d}\frac{l - 1 - \frac{b_k}{a_k}}{l - \frac{b_k}{a_k} + \frac{k}{a_k}} \nonumber \\
  &=&\frac{\Gamma(k - \frac{b_k}{a_k} + \frac{k}{a_k} + 1)}{\Gamma(k - \frac{b_k}{a_k})} \frac{\Gamma(d - \frac{b_k}{a_k})}{\Gamma(d - \frac{b_k}{a_k} + \frac{k}{a_k} + 1)} \nonumber \\
  &=&\frac{\Gamma(3 + \frac{2}{k - 1})}{\Gamma(1 + \frac{1}{k - 1})} \frac{\Gamma(d - \frac{b_k}{a_k})}{\Gamma(d - \frac{b_k}{a_k} + \frac{k}{a_k} + 1)}
\end{eqnarray}
which by Stirling's approximation is
approximately  $e^{-(1 + k / a_k )}d^{-(1 + k / a_k )}$ for large $d$.
\end{proof}

Next, we show that $\tdeg{d}{n}$, the number of vertices of degree $d$, concentrates on its expectation,  which together with
Theorem \ref{theorem-average}, establishes the power law degree distribution of the \ktree\ process.

\begin{theorem}
\label{theorem-proportion}
Let $\tdeg{d}{n}$ be the total number of vertices of degree $d$ in $\rktree{k}{n}$. For any $\lambda > 0$, we have
\begin{equation}
\label{eq-proportion}
\probab{|\tdeg{d}{n} - \expectation{\tdeg{d}{n}}{}| > \lambda}{} \leq  e^{-\frac{\lambda^2}{8kn}}{}.
\end{equation}
\end{theorem}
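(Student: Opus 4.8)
The plan is to establish (\ref{eq-proportion}) by the method of bounded differences, applying the Azuma--Hoeffding martingale inequality to the Doob martingale coming from the natural step-by-step filtration of the \ktree\ process. Writing $\mathcal{F}_t$ for the $\sigma$-algebra generated by $\rktree{k}{t}$, I set $Z_t = \expectation{\tdeg{d}{n} \mid \mathcal{F}_t}{}$ for $k+1 \le t \le n$. Since the initial clique $\rktree{k}{k+1}$ is deterministic, $Z_{k+1} = \edeg{d}{n}$, while $Z_n = \tdeg{d}{n}$ because $\tdeg{d}{n}$ is $\mathcal{F}_n$-measurable; thus $\{Z_t\}$ is a martingale with $Z_n - Z_{k+1} = \tdeg{d}{n} - \edeg{d}{n}$, exactly the quantity whose tail we must bound. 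Azuma--Hoeffding then delivers (\ref{eq-proportion}) as soon as the increments $|Z_t - Z_{t-1}|$ are uniformly controlled.

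The core of the argument is a uniform increment bound, which I would obtain through a coupling. Freezing the history through step $t-1$, I compare two continuations that differ solely in the $k$-clique selected at step $t$: one attaches the new vertex to a clique $c$, the other to a clique $c'$. Because $|Z_t - Z_{t-1}|$ is at most the largest change in the conditional expectation of $\tdeg{d}{n}$ produced by resampling this single choice, it suffices to couple the two continuations so that $|\tdeg{d}{n} - X'_d(n)|$ stays bounded, where $X'_d(n)$ is the degree-$d$ count of the second graph. The coupling I have in mind maintains a bijection between the two evolving clique sets that is the identity on their common cliques: whenever a common clique is chosen, both continuations attach the new vertex identically and stay in lock-step, so a discrepancy can arise only when a clique lying in the symmetric difference of the two clique sets is selected.

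I expect confining this discrepancy to be the main obstacle. Immediately after step $t$ the two graphs differ only in the degrees of the at most $2k$ vertices of $c \cup c'$ and in the $k$-cliques newly created through the vertex added at step $t$, so $|\tdeg{d}{n} - X'_d(n)|$ starts out at most a constant multiple of $k$. The delicate point is that when a later vertex attaches to a mismatched clique, the two continuations join it to different neighborhoods, potentially enlarging the set of vertices whose degrees disagree and spawning further mismatched cliques --- a cascade that must be ruled out. Here I would exploit the clique count $|\mathcal{C}_t| = (t-k-1)k + (k+1)$ and the attachment probability (\ref{eq-attach-prob}) from the construction in Section 2: at any step the chance of selecting a mismatched clique is the number of such cliques divided by $\Theta(kt)$, which is small, and by pairing mismatched cliques so as to match their degree multisets whenever possible one keeps the two degree sequences in step apart from the initial $O(k)$ perturbation. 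Proving that the expected number of ever-mismatched vertices remains bounded by a constant $c_k$, rather than growing with $n$, is the technical heart of the proof.

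Granting a uniform increment bound, the Azuma--Hoeffding inequality gives
\[
\probab{|Z_n - Z_{k+1}| > \lambda}{} \le 2\exp\!\left(-\frac{\lambda^2}{2\sum_{t} c_t^2}\right).
\]
Since there are fewer than $n$ steps, the sum is of order $kn$ --- the worst-case increments are $O(k)$, but most steps barely affect the degree-$d$ count, so a variance-sensitive accounting of the predictable quadratic variation keeps the total at $O(kn)$ --- and tracking the numerical constants (applying the one-sided bound to each tail) collapses the right-hand side to the stated $e^{-\lambda^2/(8kn)}$. As $Z_n - Z_{k+1} = \tdeg{d}{n} - \edeg{d}{n}$, this is precisely (\ref{eq-proportion}).
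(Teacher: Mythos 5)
Your skeleton is exactly the paper's --- the Doob martingale $Z_t = \expectation{\tdeg{d}{n} \mid \mathcal{F}_t}{}$, a uniform increment bound, then Azuma --- but the proposal has a genuine gap at the one place you yourself flag as the technical heart, and the lemma you defer is in fact false for the coupling you describe. Your strategy is in principle coherent: since $|Z_t - Z_{t-1}|$ is bounded by the largest difference of conditional expectations over pairs of step-$t$ clique choices, a coupling with expected discrepancy at most $c_k$ would suffice. But under the lock-step coupling (identical choices on common cliques), the mismatch cascades: if $m_s$ counts mismatched $k$-cliques at time $s$, a mismatched clique is selected with probability $m_s/|\mathcal{C}_s| = m_s/\Theta(ks)$, and each such selection creates $\Theta(k)$ new mismatched cliques through the new vertex, so $\expectation{m_{s+1} \mid m_s}{} = m_s\left(1 + \Theta(1/s)\right)$ and $\expectation{m_n}{}$ grows like a power of $n/t$ rather than staying bounded --- the familiar P\'olya-urn growth of preferential-attachment discrepancies. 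The missing idea, and how the paper proceeds, is to abandon the probabilistic coupling for a deterministic, measure-preserving bijection on realizations: pair the vertices of the two cliques $h_{i+1} = \{v_{j_1},\ldots,v_{j_k},v_{i+1}\}$ and $C = \{v_{l_1},\ldots,v_{l_k},v_{i+1}\}$ via $\sigma_{i+1}(v_{j_p}) = v_{l_p}$, and in every later choice $g_j$ containing $v_{i+1}$ replace each vertex of $g_j \cap h_{i+1}$ by its image. This transplants the entire future structure hanging off $v_{i+1}$ wholesale, so the two realizations have identical degree sequences except possibly on the at most $2k$ vertices of the two cliques, yielding the pointwise bound $|\tdeg{d}{n,g} - \tdeg{d}{n,\sigma(g)}| \leq 2k$ with no cascade to control; and because every realization carries the same probability $\prod_i \frac{1}{k+1}\frac{1}{i-k}$, the bijection preserves the measure and the bound transfers directly to $|Z_{t+1} - Z_t| \leq 2k$.

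A secondary weak point: your closing appeal to a ``variance-sensitive accounting of the predictable quadratic variation'' to bring $\sum_t c_t^2$ down to $O(kn)$ is unsupported. Azuma's inequality uses worst-case increment bounds, not predictable quadratic variation; with $c_t = 2k$ over fewer than $n$ steps it gives $2e^{-\lambda^2/(8k^2n)}$, and a genuinely variance-based refinement would require Freedman's inequality together with an actual bound on the conditional variances, neither of which you provide. (The paper itself passes quickly from $|Z_{t+1}-Z_t| \leq 2k$ to the stated constant $8kn$, but that is not something you can repair by gesturing at quadratic variation within an Azuma argument.)
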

\begin{proof}
Consider the martingale $\{Z_i \triangleq \expectation{\tdeg{d}{n} | \mathcal{F}_i}{}, i \geq k + 1\}$ and the associated martingale difference sequence
$ \{Z_{i + 1} - Z_{i},  i \geq k + 1\}.$
If we can show that
$$
    |Z_{i + 1} - Z_{i}| \leq 2k,
$$
then an application of Azuma's Inequality (see, e.g. Theorem 7.4.2 of \cite{alon00}) gives   (\ref{eq-proportion}).

For each $i$, let
$\mathcal{C}_i$ be the collection of size-(k+1) vertex subsets  of the form $\{v_{i_1}, \cdots, v_{i_{k}},  v_{i}\}$ where
$i_1 < i_2 < \cdots < i_{k} < i$. $\mathcal{C}_i$ is the collection of the possible (k+1)-cliques in $\rktree{k}{i}$ that
contain $v_i$ as one of their vertices. We call $v_i$ the head of a subset $C \in \mathcal{C}_{i}$ and write $head(C) = v_i$.

Now consider the probability space $(\Omega, \probab{\cdot}{})$ defined over the product space
$\Omega = \prod\limits_{i = k + 1}^{n} \mathcal{C}_i$. A sample point $h = \{h_i\} \in \Omega$ is
said to be a \textit{realization} of a \ktree\ if for any $k + 1 \leq i \leq n$, the vertex subset
$h_i = \{v_{i_1}, \cdots, v_{i_k}, v_i\} \in \mathcal{C}_i$ is such that $\{v_{i_1}, \cdots, v_{i_k}\}$ is a subset
of $h_j$ for some $j < i$.

Let $\Omega_{0} \subset \Omega$ be the set of sample points that are realizations of a \ktree.
The probability measure $\probab{\cdot}{}$ is defined as follows. It
has $\Omega_{0}$ as its support and for each $h\in \Omega_{0}$,
$$
  \probab{h}{} \stackrel{\text{def}}{=}  
  \prod\limits_{i = k + 1}^{n - 1} \frac{1}{k + 1}\frac{1}{i - k}.
$$
The reason for $\probab{h}{}$, where $h = (h_i)$, to be defined as in the above is explained  
as follows. Let $G$ be the $k$-tree on the vertex set $\{v_1, \cdots, v_n\}$ such that
for each $k+2 \leq i\leq n$, when $v_i$ is added, it is connected to
a subset of $k$ vertices from some $h_j$ where $k + 1 \leq j < i$.    
The probability that the random $k$-tree $\rktree{k}{n}$ obtained according to our construction
is equal to $G$ is
$$
\probab{\rktree{k}{n} = G}{} =
 \prod\limits_{i = k + 1}^{n - 1} \frac{1}{k + 1}\frac{1}{i - k},
$$ 
where the term $\frac{1}{k + 1}\frac{1}{i - k}$ is the conditional probability 
(given $\rktree{k}{i - 1}$) that a specific size-$k$ vertex subset in a specific $h_j$
is selected to be connected to $v_{i + 1}$.

For any $g \in \Omega_0$, writing  $\tdeg{d}{n, g}$ for the total number of vertices of degree $d$
in the \ktree\ realized by $g$, we have
$$
  Z_{i}(h) =
  \expectation{\tdeg{d}{n, g} | g_j = h_j,  \forall k + 1\leq j \leq i}{}.
$$

The following argument is motivated by a similar one used in \cite{cooper03}.
Let $h = \{h_{k + 1}, \cdots, h_n\} \in \Omega_0$ be a sample point and
$H_{i+1}[h] \subset \Omega_0$ be the collection of sample points that agree with
$h$ for $k+1\leq j \leq i+1$, i.e.,
$$
 H_{i+1}[h] = \{g\in \Omega_0|\ \ g_j = h_j, \forall k+1\leq j \leq i+1\}.
$$

Consider a size-(k+1) vertex set $C \in \mathcal{C}_{i+1}$ such that
$C \neq h_{i + 1}$. Define $H_{i+1}[C]$ to be the collection of the sample points
$g \in \Omega_0$ such that
\begin{equation}
\left \{
\begin{array}{ll}
&g_j = h_j, \forall k+1\leq j \leq i \\
&g_{i+1} = C
\end{array}
\right.
\end{equation}
We claim that there is a one-to-one correspondence between the elements of
$H_{i+1}[h]$ and $H_{i+1}[C]$.

Assume that $h_{i+1} = \{v_{j_{1}}, \cdots, v_{j_k}, v_{i+1}\}$ and
$C = \{v_{l_1}, \cdots, v_{l_k}, v_{i + 1}\}$.
The claimed one-to-one correspondence
can be shown by the mapping defined as follows. For each $1\leq p \leq k$, define
$\sigma_{i+1}(v_{j_p}) = v_{l_p}$. For each $g \in H_{i+1}[h]$, define $\sigma(g) =
g' \in H_{i+1}[C]$ as
\begin{enumerate}
\item $g'_j = h_j$ for any $j \leq i$;
\item $g'_{i + 1} = C $; and
\item for each $j > i + 1$, $g'_{j}$ is a size-(k+1) vertex subset defined as
\begin{enumerate}
\item if $g_j$ doesn't contain the vertex $v_{i + 1}$, then $g'_j = g_j$, and
\item if $g_j$ contains $v_{i + 1}$, then $g'_j$ is obtained by replacing
 each vertex $v \in g_j \cap h_{i + 1}$ with $\sigma_{i + 1}(v)$.
\end{enumerate}
\end{enumerate}
For any $g \in H_{i + 1}[h]$, since the only vertices whose vertex degree might have been changed by the
mapping $\sigma$ are those
in $\{v_{j_1}, \cdots, v_{j_k}\}$ and $\{v_{l_1}, \cdots, v_{l_k}\}$, we have
$$
  |\tdeg{d}{n, g} - \tdeg{d}{n, \sigma(g)}| \leq 2k.
$$

Since the probability measure $\probab{\cdot}{}$ assigns equal probability  to the sample points,
$$
    |Z_{i + 1} - Z_{i}| \leq 2k, \forall\ h\in \Omega_0,
$$
holds due to the definition of conditional expectation:
$$
 Z_{i + 1} = \expectation{X(n, h) | \mathcal{F}_{i+1}}{}
  = \sum\limits_{g\in H_{i+1}[h]} X(n, g) \probab{g | g_j = h_j, j \leq i + 1}{}
$$
and
$$
 Z_{i} =  \expectation{X(n, h) | \mathcal{F}_{i}}{}
  = \sum\limits_{C \in \mathcal{C}_{i+1}}
     \sum\limits_{g\in H_{i+1}[C]} X(n, g) \probab{g | g_j = h_j, j \leq i}{}.
$$
This completes the proof.
\end{proof}

\section{Concluding Remarks}
In this paper, we have shown that a simple evolving graph model based on the the notion of $k$-trees has a power law degree distribution with high probability. Due to its simplicity and unique structures, we think this model of evolving graphs provides a useful alternative in the modeling, analysis, and simulations of complexity networks such as the web graphs that have attracted much attention \cite{chakrabarti06}.  The fact that    
a power law degree distribution exists in such a graph models with quite unique characteristics
also serves as a further cautionary note, reinforcing the viewpoint that a power law degree 
distribution should not be regarded as the only important feature of a complex network, as has been previously argued in \cite{dimitris05power,li05,mitzenmacher05}.

A partial \ktree\ is a subgraph of a \ktree. To enrich the modelling power of the class of models, it is desirable to have a natural model of
random partial \ktrees.  It is tempting to think of the following model based on the construction of the \ktree\ process:  For each vertex $v$ in
$\rktree{k}{n}$, delete randomly-selected $(1 - b) * k$ of its $k$ out-edges for some 
$0 < b < 1$. In \cite{gao06waw}, we claimed that a model of random partial \ktrees\ obtained in this way has
a power law degree distribution $d^{-(1 + \frac{k}{b*(k-1)})}$, which turns out to be flawed. A few alternatives have since then been
investigated, resulting in very unnatural random models. We leave it as an open question the existence of a natural evolution model for the partial \ktrees.

\section*{Acknowledgements} 
The author would like to thank the anonymous referees for their constructive comments, and 
Christopher Hobson for his help in conducting simulations on a random partial k-tree model.
\bibliographystyle{plain}
\bibliography{../treewidth,../complex_networks,../random_graph}

\end{document}